\newtheorem{thmm}{Theorem}
\newtheorem{remm}{Remark}
\newtheorem{assm}{Assumption}
\newtheorem{deff}{Definition}
\newtheorem{Lemm}{Lemma}
\DeclareMathOperator*{\argmax}{argmax}
\begin{document}

\title{Route Choice-based Socio-Technical Macroscopic Traffic Model}

\author{
    Tanushree Roy \\
    Department of  Mechanical Engineering \\
     The Pennsylvania State University \\
    University Park, PA 16802, USA. \\
    \texttt{tbr5281@psu.edu}\\
    \And
    Satadru Dey \\
     Department of  Mechanical Engineering \\
     The Pennsylvania State University \\
    University Park, PA 16802, USA. \\
    \texttt{ skd5685@psu.edu}
}

\maketitle

\begin{abstract}
Human route choice is undeniably one of the key contributing factors towards traffic dynamics. However, most existing macroscopic traffic models are typically concerned with driving behavior and do not incorporate human route choice behavior models in their formulation. In this paper, we propose a socio-technical macroscopic traffic model that characterizes the traffic states using human route choice attributes. Essentially, such model provides a framework for capturing the Cyber-Physical-Social coupling in smart transportation systems. To derive this model, we first use Cumulative Prospect Theory (CPT) to model the human passengers' route choice under bounded rationality. These choices are assumed to be influenced by traffic alerts and other incomplete traffic information. Next, we assume that the vehicles are operating under a non-cooperative cruise control scenario. Accordingly, human route choice segregates the traffic into multiple classes where each class corresponds to a specific route between an origin-destination pair. Thereafter, we derive a Mean Field Game (MFG) limit of this non-cooperative game to obtain a macroscopic model which embeds the human route choice attribute. Finally, we analyze the mathematical characteristics of the proposed model and present simulation studies to illustrate the model behavior.
\end{abstract}

\maketitle

\section{Introduction}

Modern Intelligent Transportation Systems (ITSs) exhibit strong interactions between Information \& Communication Technology (ICT), physical traffic flow, and human social behavior. Such interaction is even stronger in smart mobility solutions such as Connected  Adaptive Cruise Control (CACC) systems. This motivates the need for modeling traffic dynamics in ITSs as socio-technical systems which capture the Cyber-Physical-Social coupling \cite{whitworth2013social}. Essentially, human cognitive and social behavior is incorporated with the physical vehicular dynamics and cybernetic strategies of the smart mobility solutions in this type of socio-technical models. Such modeling strategies have the potential to provide a quantifiable connection between smart mobility dynamics and human behavioral dynamics.  Along this line, this paper presents a socio-technical modeling framework for traffic systems that characterizes the macroscopic traffic dynamics in terms of the decision behavior of the human-in-the-vehicle, under a CACC driving scenario. 

In literature there exist different frameworks for modeling human behavioral aspects, Cumulative Prospect Theory (CPT) being one of them \cite{CPT_main}. Within this framework, the behavior of humans as decision makers under uncertainty is of bounded rationality. Essentially, CPT models this decision behavior using subjective utility of outcome and subjective perception of probability. CPT was originally proposed in the context of economics, and later applied to human-in-the-loop technical systems. For example, CPT is employed in consumer-behavior based electricity pricing \cite{smartgrid_CPT}, cloud-storage defense strategy \cite{cloud_defense_CPT}, and evaluation of renewable power sources \cite{eval_CPT}. In transportation related applications, CPT has been explored in  passenger behavior modeling based on waiting time \cite{Wait_TimeCPT}, driver behavior in context  changing to High-Occupancy-Vehicle lanes \cite{HOV_lanechange_CPT}, interaction framework between traffic information provider and user modeled as a Stackelberg game \cite{CPT_stakelberg}, developing adaptive pricing strategies for Shared Mobility on Demand Services \cite{AnnaswamyCPT}. Besides, CPT has been explored in the context of modeling driver route choice behavior where the traveller has incomplete traffic information \cite{xu_routeCPT}, friends' travel information \cite{zhang_frnd_cumulative}, and traffic information through variable message sign indicators \cite{gao_route,gan_msg_sign}. In \cite{travel_info_impact},  Logit Kernel to model human route choice as a function of travel time information. Nevertheless, these aforementioned works do not to explore the impact of such human route choices on macroscopic traffic behavior. This is especially relevant now because of the increasing  distribution of traffic information through platforms such Google Maps and INRIX. 


Existing traffic modeling strategies can be generally categorized into two classes: (i) vehicle-based or microscopic, and (ii) traffic flow-based or macroscopic. In literature, microscopic modeling of Adaptive Cruise Control (ACC) enabled cars in the cooperative setting has been explored in \cite{milanes_CACC,xiao_CACC_micro,davis2004effect,wang2013coop,wang2014coop} while ACC under non-cooperative setting has been addressed in \cite{talebpour2015modeling,wang2014noncoop, yu2018noncoop}. The inherent disadvantage of such microscopic models is the computational burden for increased number of vehicles. Hence, efforts have been made towards macroscopic modeling of ACC enabled traffic flow in cooperative \cite{ngoduy2013instability, delis2015macroscopic} and non-cooperative \cite{nikolos_CACC_macroscopic,Di2019game,Di2} settings. Additionally, in \cite{sadigh2016planning,chicken,li2017game}, interactions between individual autonomous vehicle and human-driven vehicle or pedestrian have been investigated. {In \cite{Di2019game}, a macroscopic model traffic flow model for autonomous vehicles has been derived using Mean Field Games (MFG) by connecting microscopic vehicular dynamics to macroscopic traffic flow. Additionally, the same authors extended their work in \cite{Di2} to capture traffic dynamics in a mixed traffic scenario containing autonomous vehicles and human-driven vehicles. However, in \cite{Di2019game,Di2}, the human-driven vehicles lack human behavioral models and autonomous vehicles were modelled as rational agents. Hence, these works do not address the impact of route choices made by the human passengers on macroscopic traffic flow. In our present paper, we utilize MFG setting (similar to the one discussed in \cite{Di2019game}) to connect microscopic dynamics to macroscopic models, and propose a modeling framework to capture the human behavioral aspects in macroscopic traffic dynamics. Such incorporation of human behavioral aspects enables understanding of traffic dynamics in realistic settings with cyber-physical-social interactions. This is essential since even with fully autonomous vehicles, some features such as choice of routes will often lie with the human passengers \cite{route_choice_withAV}. Evidently, these choices would be influenced by human behavior and the incomplete traffic information available to them. }

To address the aforementioned gap, the main contribution of the paper is a socio-technical macroscopic traffic flow model with human route choice attributes. Specifically, we formulate a multi-class model for non-cooperative CACC enabled vehicle traffic flow, which is parameterized by outputs from human route choice behavior model. We have modelled the human decision making behavior using CPT where the utility of choosing a specific route depends on the incoming traffic information and driving convenience knowledge about each route, e.g. general road conditions and presence of tolls. Needless to say, this route choice behavior dictates the number vehicle along each route. Next, we have assumed that all vehicles are non-cooperatively optimizing their cost functional along the chosen route which in turn leads to a differential game setting. In order to transition from this microscopic setting to macroscopic traffic characteristics, we take Mean Field Game (MFG) limit of this differential game. Subsequently, we obtain a multi-class model for the macroscopic traffic flow that provides the continuum equation as well as the dynamics of the driving cost. Here, this driving cost embeds the human route choice behavior as an attribute of the traffic flow. We also analyze the following mathematical properties of the socio-technical traffic model: (i) Fundamental Diagram, (ii) hyperbolicity, and (iii) Lyapunov-based linear stability.

This paper is organized as follows. Section II discusses the Problem statement of the paper, Section III develops the modeling framework, Section IV proposed the Socio-technical model for traffic, Section V discusses a case study, Section VI shows the simulation results of our work followed by conclusion in  Section VII.

\section{Problem Statement}
We consider a macroscopic traffic flow of CACC enabled connected and autonomous vehicles between single Origin-Destination (OD) pair A and C as shown in \mbox{Figure \ref{fig:route}}. Without loss of generality, we consider two routes between this OD pair that fork at junction C, as shown in \mbox{Figure \ref{fig:route}}. At junction C, each vehicle has to choose either Route 1 or 2. Such choice is made by the human passenger in the vehicle based on the accessible route information. Essentially, the traffic flow starting at Origin A divides into two streams at junction C, where one stream takes Route 1 and the other stream takes Route 2. Next, we make the following assumptions regarding our problem setting.

\begin{figure}[ht]
    \centering
    \includegraphics[trim = 0mm 0mm 0mm 0mm, clip, scale=1.0, width=0.5\linewidth]{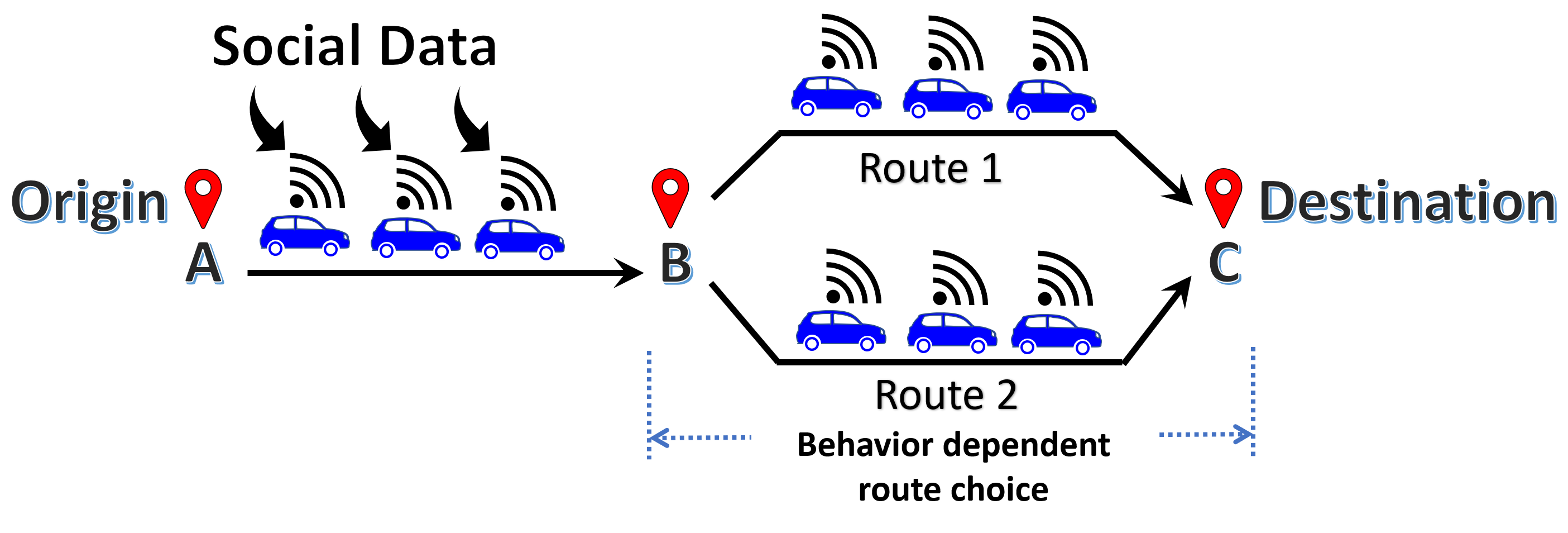}
    \caption{Socio-Technical traffic dynamics.}
    \label{fig:route}
\end{figure}

\begin{assm}
Each vehicle is occupied by a passenger who has control over vehicle route selection.
\end{assm}

\begin{assm}
Routes 1 and 2 have different characteristics in terms of estimated travel time, maximum density, and road condition. As expected, these factors influence the decision of the human passenger in the vehicle to opt for a specific route. Additionally, we assume that the condition of these routes are dynamically changing with time. For example, such changes can occur due to accidents, unpredictable road conditions from construction work or pedestrian traffic. 
\end{assm}

\begin{assm}
The human passenger in the vehicle has partial information about changing characteristics of each route through various traffic reporting platforms such as Google Maps, MapQuest, INRIX, twitter feeds, and cell phone texts. Such an assumption is reasonable in light of the extensive dissemination of social data in today's world \cite{eagle2014reality}.
\end{assm}

\begin{assm}
Due to the presence of partial or incomplete information, the human passenger is assumed to be a bounded rational agent where their preferences are motivated to maximize the \textit{utility} of a choice. 
\end{assm}

\begin{assm}
All the vehicles are identical in terms of physical structure and CACC driving capabilities. While behavior of vehicles choosing a specific route are homogeneous within the group, groups of vehicle corresponding to each route differ from one another in terms of the driving costs. In other other words, depending on the choice of route the vehicles are grouped into classes that are dependent on the costs to drive on that particular route. 
\end{assm}

\begin{assm}
All vehicles receive certain traffic updates through traffic reporting platforms at the same time. 
\end{assm}

Based on the aforementioned setting, our goal is to model the macroscopic traffic dynamics in the whole segment A-B-C, characterized by the behavioral dynamics of the human passengers. To develop this modeling strategy, we execute the following steps. A schematic of the model development framework is shown in Fig. \ref{fig:model}.

\begin{figure}[ht]
    \centering
    \includegraphics[trim = 0mm 0mm 0mm 0mm, clip, scale=1.0, width=0.5\linewidth]{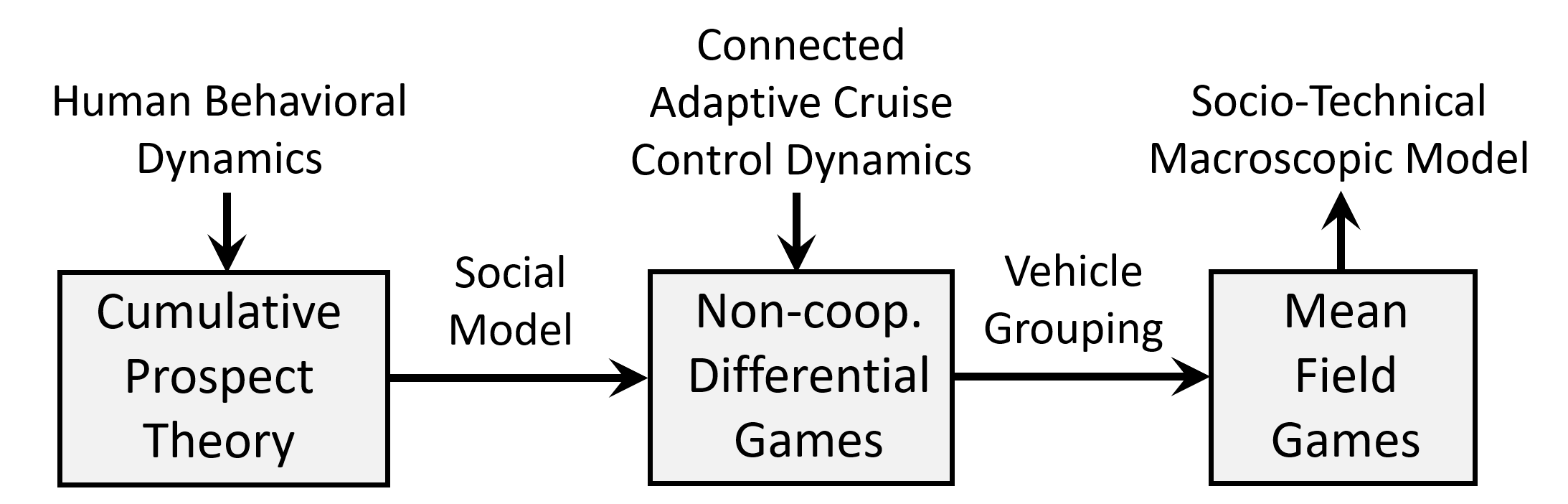}
    \caption{Socio-Technical macroscopic traffic modeling framework.}
    \label{fig:model}
\end{figure}

\textit{Step 1:} First, we model the human choice dynamics using Cumulative Prospect Theory (CPT) which takes into account human behavioral traits such as loss aversion, perception of loss or gain of an outcome dependent on some reference and under-weighing common events while over-weighing unlikely events. Depending on the probability of each human passenger to choose a particular route, the number of vehicles that will proceed towards Route $j \in \{1,2\}$ is decided.

\textit{Step 2:} Next, utilizing these two groups of vehicles (corresponding to two routes), a non-cooperative differential game is set up such that (i) vehicles in each group try to obtain a minimum driving cost, and (ii) vehicles in one group maintain safety from other group of vehicles. 

\textit{Step 3:} Subsequently, we take an arbitrarily large limit for each group of vehicles to derive a Mean Field Games (MFG). This generates a multi-class macroscopic traffic flow model of CACC-enabled vehicles. This multi-class macroscopic model is a socio-technical traffic model since human route choice attribute  parameterizes its Fundamental Diagram.

In the next section, we discuss these steps in detail.

\section{Model Development Framework}

In this section, we detail the model development framework. We start with CPT based modeling of human choice behavior followed by differential and mean field games based macroscopic models.

\subsection{Modeling Route Choice Based on  Traffic Alerts}

The route selection decision of human passenger under the influence of traffic alerts can be modeled using Cumulative Prospect theory (CPT). The behavior of humans as decision makers under uncertainty is of bounded rationality \cite{gigerenzer_bounded_rational}. CPT models this decision behavior using subjective utility of outcome and subjective perception of probability \cite{CPT_main}. The prospect value of choosing a route is uncertain due to various reasons such as congestion, unpredictable road conditions among many others. As described before, we assume two  possible prospects: either someone chooses Route 1 or Route 2. We intend to utilize CPT to obtain the probability of people who would act favourably on receiving a traffic alerts suggesting a change from Route 1 to 2. In other words, we compute the fraction of people who would change their route from Route 1 to 2 upon receiving a route change recommendation. This fraction depends on the subjective perception of the prospect value of a choice and can be modeled using CPT in the following way.

A prospect $j$ is denoted by a sequence of pairs consisting of utility or value (of loss or gain) and probability (of loss or gain), for $N_j$ possible outcomes i.e. $(z^j_1,p^j_1,\hdots,z^j_{N_j},p^j_{N_j})$ where $z^j_i$ are the utilities (modeled as discrete random variables) and $p^j_i$ are the corresponding probabilities with $i \in \{1,\cdots,N_j\}$. Consequently, the objective value of a prospect is given by \cite{CPT_main}
\begin{align}
\mathbb{U}_o^j=\sum_{i=1}^{N_j}z^j_ip^j_i.
\end{align}
In our problem set-up, there are two prospects that the human decision maker is choosing from: Route 1 and Route 2. Hence in our case $j\in \{1,2\}$. 
 
However, human decisions are most often subjective and far from rational. That is, the utility and probability of each outcome is perceived differently. CPT formulates this modification of utility and probability under the following axioms of human decision phenomenon \cite{CPT_main}.

\textbf{Subjective Utility:} The perceived gain or loss value of an outcome is affected by the following factors:
    \begin{itemize}
    \item \textit{Framing Effect:} Subjective loss or gain from a prospect is perceived with respect to a reference value.
    \item \textit{Loss Aversion:} Humans are more affected by a loss than an equal amount of gain. This causes a difference in how loss or gain of a prospect is perceived subjectively and leads to an attitude of risk aversion in face of loss outcome.
    \item \textit{Diminishing Value Sensitivity:} Individuals are less affected by \textit{ changes} in loss (gain) when the value of the prospect is already in high losses (high gains). 
    \end{itemize}

\textbf{Probability Distortion:} The perception of probability of an outcome is affected by the following factors:
    \begin{itemize}
    \item \textit{Over/Underweighting:} Human decision behavior is highly influenced by unlikely events while ignoring highly probable events i.e. low probability events are over-weighted compared to highly probability events. 
    \item \textit{Increased Probability Sensitivity:} Near the end points of probability $p=0$ and $p=1$, changes in probability is perceived more than for mid-range probability.
    \item \textit{Relative sensitivity:} Different attitudes associated towards probabilities for gain and losses are observed and are listed as follows: (i) Risk aversion over gains of high probability, (ii) Risk aversion over losses of small probability, (iii) Risk seeking  over gains of low probability and (iv) Risk seeking over losses of high probability.
    \end{itemize}

Motivated by these axioms, the subjective utility of an outcome can be modeled by the value (or utility) function $z \to U(z)$ given by
\begin{align}\label{utility_func}
    U(z)=\left\{\begin{array}{cc}
        (z-R)^{\beta^+}, & \text{if } z> R, \\
        -\lambda(R-z)^{\beta^-}, & \text{otherwise,}
    \end{array}  \right.
\end{align}
where $R$ is the reference for loss or gain, $\lambda>1$ denotes loss aversion, and $\beta^+>0$ and $\beta^-<1$ denote the diminishing sensitivities to loss or gain. 

Let the utilities be arranged in an ascending order from maximum loss to maximum gain, i.e. $z^j_1\leqslant z^j_2 \leqslant \hdots\leqslant z^j_g \leqslant 0\leqslant z^j_{g+1} \leqslant \hdots \leqslant z^j_{N_j}$ for a given prospect $j$. Here positive utilities imply gains and non-positive utilities imply losses. Let the probability corresponding to these utilities are given by $p^j_1,p^j_2,\hdots, p^j_g,p^j_{g+1},\hdots, p^j_{N_j}$. Then the probability distortion experienced by a limited rational human decision maker is given by the probability weighting function $w(p)$ defined as
\begin{align}
    w(p)=e^{-(-\ln p)^\gamma}.
\end{align}
The function $w(.)$ is 
known as the Prelec's function \cite{prelec}. For $0<\gamma<1$, it produces the characteristic inverse-S shaped curve showing the desired probability distortion characteristics. In order to capture the four risk seeking or avoiding characteristics (mentioned under \textit{Relative Sensitivity} above), CPT defines $\pi_1^{j-},\cdots,\pi_g^{j-}$ as the transformation of the probability for loss, and $\pi^{j+}_{g+1},\cdots,\pi_{N_j}^{j+}$ as the transformation of the probability for gain, which are given by \cite{fennema1997original}
\begin{align}
    & \pi_1^{j-}=w(p^j_1), \\\label{piplus}
    & \pi^{j-}_i=w\left(\sum_{m=1}^ip^j_m\right)-w\left(\sum_{m=1}^{i-1}p^j_m\right),   
\end{align}    
where $2\leqslant i \leqslant g$, and 
\begin{align}
    & \pi_{N_j}^{j+}=w(p^j_{N_j}), \\\label{piminus} &\pi^{j+}_i=w\left(\sum_{m=i}^{N_j}p^j_m\right)-w\left(\sum_{m=i+1}^{N_j}p^j_m\right), 
\end{align}
where $g+1\leqslant i \leqslant N_j-1$. The transformations $\pi_1^{j-},\cdots,\pi_g^{j-}, \pi^{j+}_{g+1},\cdots,\pi_{N_j}^{j+}$ essentially are the decision weights for each outcome $z^j_i$. Subsequently, the overall CPT (subjective) value of a prospect $j$ is given by
\begin{align}\label{discrete_U}
    \mathbb{U}_j=\sum\limits_{i=1}^{g}\pi_i^{j-} U(z^j_i)+\sum\limits_{i=g+1}^{N_j}\pi_i^{j+} U(z^j_i).
\end{align}
Next, consider the case where we have utility $\mathbb{Z}$ as a continuous random variable as opposed to discrete utility variables $z^j_i, i \in \{1,\cdots,N_j\}$. The probability and complimentary distribution functions of this continuous random utility $\mathbb{Z}$ are related to the discrete probability distribution by the following:
\begin{align}
    & F_\mathbb{Z}(z^j_i)=P(\mathbb{Z}\leqslant z^j_i)=\sum\limits_{z^j_g\leqslant z^j_i}p(z^j_g)=\sum_{g=1}^ip^j_g,\\
    & F_\mathbb{Z}(z^j_{i-1})=P(\mathbb{Z}>z^j_{i-1})=\sum\limits_{z^j_g>z_{i-1}}p(z^j_g)=\sum_{g=i}^{N_j}p^j_g.
\end{align} 

Accordingly, the overall CPT (subjective) value of a prospect $j$ \eqref{discrete_U} can be re-written as 
\begin{align}\nonumber
    \mathbb{U}_j=&\int_{-\infty}^RU(z)\frac{d}{dz}\left\{\pi(F_\mathbb{Z}(z))\right\}dx+\int_R^\infty U(z)\frac{d}{dz}\left\{-\pi(1-F_\mathbb{Z}(z))\right\}dx.
\end{align}
where $R$ is the reference for loss or gain, defined in \eqref{utility_func}. That is, $R$ is the certain level of utility that the human passenger perceives as needed to reach to the destination. For example, before starting the journey the human passenger allots certain amount of time to reach the destination and any deviation from that will be perceived as a \textit{gain} or \textit{loss} depending on whether they arrive earlier or later than their allotted time, respectively. 

Next, we define the utilities of possible outcomes and their probabilities. For any prospect or choice of route, the outcome is stochastic as it depends on various factors \cite{travel_info_impact}. Among these, we consider the effect of incoming traffic alerts from various traffic reporting platforms on the human passenger. Apart from the traffic alerts, we may also consider other stochastic factors that affect a human's decision to choose a route such as stochastic travel time, ease of navigation on a road, additional tariff etc. Here, we assume the utility of choosing a route to be a continuous random variable that depends on: (i) incoming traffic alerts, (ii) priority/veracity of these data, (iii) estimated travel time, and (iv) other fixed driving factors. Consequently, we represent the utility random variable as follows:
\begin{align}
    \mathbb{Z}=a_1S_1+\hdots+a_MS_M+k_1T+k_2,
\end{align}
where $S_1,\hdots,S_M\in \{0,1\}$ are incoming traffic alerts that follow Poisson Distribution; the weights $a_1,\hdots, a_M$ represents the decision maker's trust in or potential to act to each of these alerts; $T$ is stochastic travel time with $k_1$ as a weight; and $k_2$ captures the fixed driving convenience factors such as absence of toll, not leading to congested neighborhood detours, familiarity with the route, general road condition knowledge among many other factors. These weights are chosen to make $\mathbb{Z}$ dimension-less and to normalize each terms. Continuous travel time random variable $T$ is assumed to have a truncated normal distribution.

Thereafter, a logit model is  used to to predict the probability of an outcome. This enables us to capture the probabilistic nature of human decision making \cite{nilssonCPT}.  We define the utility random variable for the human passenger in $k$-th vehicle to be $\mathbb{Z}^k$ and the corresponding CPT (subjective) utility value to be $\mathbb{U}_j^k$. Hence, the probability of human passenger in $k$-th vehicle to choose Route $j$ is given by the logit model as follows \cite{nilssonCPT}:
\begin{align}
    p_j^k = \frac{e^{\phi  \mathbb{U}^k_j }}{e^{\phi  \mathbb{U}^k_1}+e^{\phi  \mathbb{U}^k_2}},\, \forall j \in \{1,2\},
\end{align}
where parameter $\phi>0$ is the sensitivity parameter which determines how the decision making is sensitive to individual utility. For example, with $\phi=0$, the choice is random and is unaffected by utility of either choice. With increasing $\phi$, the probability is affected increasingly more by the difference in utility of the choices. This can be easily seen by the alternate form of this logit model, for say Route 1: $p_1^k = {1}/(1+e^{\phi ( \mathbb{U}^k_2-\mathbb{U}^k_1)})$ where even small difference $(\mathbb{U}^k_2-\mathbb{U}^k_1)$ will significantly affect $p_1^k$ when $\phi$ is large.

Now, we compute $\mathcal{M}_j$, the number of human passengers who choose Route $j$ by aggregating the maximum probability of choosing that route for individual vehicles:
\begin{align}\label{Mj}
\mathcal{M}_j :=\left|\left\{m: \argmax\limits_{m\in \{1, 2\}} p_m^k=j, \,\forall k\in \{1, \hdots, \mathcal{M}\}\right\} \right|.    
\end{align}
We note here that the probability $p^k_j$ is dependent on the traffic alerts $S_1,\hdots,S_M$, weighting parameters $a_1,\hdots, a_M$, travel time $T$, scaling parameter $k_1$, driving convenience factors $k_2$ as well as parameters of human choice behavior given by $R, \beta^+, \beta^-, \lambda, \gamma$ and $\phi$. Defining a vector containing \textit{social} signals and parameters as  
\begin{align}\nonumber
    \sigma = [S_1,\hdots,S_M,a_1,\hdots, a_M,T, k_1, k_2, R, \beta^+, \beta^-, \lambda, \gamma, \phi ],
\end{align}
we can write $\mathcal{M}_j$ to be a function of $\sigma$, that is $\mathcal{M}_j(\sigma)$.

\begin{remm}
The parameters of the human behavior model can be estimated by collecting data through behavioral experiments or surveys \cite{zhang_frnd_cumulative,wang2018CPT_survey}. In order to capture the parameters accurately, the participant sample should also be varied  in terms of age, gender, race, financial standing and technological proficiency. These data collection strategies would record human route choice outcome under various scenarios of traffic alerts and other traffic information. 
\end{remm}

\subsection{CACC as a Differential Game}
In this problem, we assume that the vehicles in the traffic is Connected Adaptive Cruise Control (CACC) enabled. In this setting, the velocity of the vehicles are calculated with the information of other vehicles by solving an optimization problem \cite{wang2014noncoop}. Denoting the position and velocity of $k$-th vehicle opting Route $j$ by ${x}^{k}_j$ and ${v}^k_j$, respectively, the vehicle dynamics equation is given by  
\begin{align}\label{micro}
    \Dot{x}^{k}_j(\zeta) =v^k_j(\zeta),\, x^k_j(t) =  x^k_j,
\end{align}
for $ j\in\{1,2\}$ and $k \in \{1,\hdots,\mathcal{M}_j(\sigma)\}$ where $\mathcal{M}_j(\sigma)$ represents the number of vehicle choosing Route $j$. 
Additionally, for the $k$-th vehicle the driving overhead for choosing Route $j$ can be represented by the following functional:
\begin{align}\nonumber
    \mathcal{H}_j^k (v^k_j, {v}^{\sim k}_j) = &\int_{t_0}^t F^{\mathcal{M}_j}(v^k_j(\zeta),x^k_j(\zeta),\textbf{x}^{\sim k}_j(\zeta), \textbf{x}_{\sim j}(\zeta))\, d\zeta + R(x^k_j(t_0)),
\end{align}
where $\textbf{x}^{\sim k}_j(t) = \{x^1_j(t), \hdots, x^{k-1}_j(t), x^{k+1}_j(t), \hdots, x^{\mathcal{M}_j}_j(t)\}$ represents the positions of all vehicles except $k$-th vehicle that are opting for Route $j$, $\textbf{x}_{\sim j}$ represents positions of all vehicles opting for routes other than $j$, $v^k_j$ is the velocity of $k$-th vehicle opting for Route $j$ and ${v}^{\sim k}_j(t)$ is velocity of all vehicles other the $k$-th vehicle opting for Route $j$. Here, $F^{\mathcal{M}_j}$ is a cost functional whose form is identical for all vehicles along Route $j$ and $R(x^k_j(t_0))$ is the \textit{starting penalty} along Route $j$ which depends on the initial position of the $k$-th vehicle (Route $j$) at time $t=t_0$ (refer to \eqref{micro}). We assume that the cost functional $F^{\mathcal{M}_j}$ is strictly convex with respect to the variable $v^k_j(t)$ for the existence of solution to the Hamilton-Jacobi-Bellman (HJB) equation \cite{kirk2004}. 

In this setting, we assume that all the vehicles have same free-flow velocity $v_{max}$. Hence, the optimal velocity $u^k_j \in (0,v_{max}]$ for the $k$-th vehicle minimizes the driving cost among all other vehicles such that $ \mathcal{H}_j^k (u^k_j, {v}^{\sim k}_j) \leqslant  \mathcal{H}_j^k (v^k_j, {v}^{\sim k}_j)$. Consequently, every vehicle solves this optimal solution simultaneously to form a non-cooperative multi-vehicle differential games.

\subsection{Limiting Mean Field Game}

Generally speaking, solving the aforementioned non-cooperative multi-vehicle game is exceedingly hard with increased number of vehicles. This motivates the introduction of Mean Field Games (MFG) which is a non-cooperative game with arbitrarily large number of players \cite{cardaliaguet2010notes}. Unlike finite differential games where each player interacts with every other player, in MFG individual interactions are \textit{smoothed out} in the sense that coupling between players is only through the interaction with the average behavior or \textit{mean field}. In MFG limit of our differential games among vehicles, a global traffic behavior emerges from the collective interactions of vehicles, as derived in \cite{Di2019game}. 

Following \cite{Di2019game}, we use the concept of traffic density in order to smoothen the position information of the vehicles. This density function can be constructed using Kernel Density Estimation (KDE). Here, we use the Parzen-Rosenblatt window method \cite{parzen1962estimation,Rosenblatt1956} that \textit{smooths out} the position information over window length $a$ to produce local density information. First, the position information is captured using a Dirac comb function $C(x) = \frac{1}{N}\sum_{k=1}^N\delta(x-x_k)$. Subsequently, $C(x)$ is smoothed to the density function 
\begin{align}
    \rho^\mathcal{M} =\int_{\mathbb{R}}\Phi_a(x-y)C(y)dy= \frac{1}{\mathcal{M}}\sum_{k=1}^\mathcal{M} \Phi_a(x-x_j),
\end{align}
 where the Gaussian Smoothing kernel $\Phi_a (x)$ is given by 
\begin{align}
    \Phi_a (x) = \frac{1}{\sqrt{2\pi }a}\exp\left(-\frac{x^2}{2a^2}\right). \label{gau}
\end{align}
Accordingly, the smoothed total density function in our problem is derived as
\begin{align}\nonumber
    \rho^{\mathcal{M}}(x_1,x_2,t)=&\frac{1}{\mathcal{M}_1+\mathcal{M}_2}\Bigg[\sum\limits_{k=1}^{\mathcal{M}_1}\Phi_a(x_1-x_1^k(t))+\sum\limits_{k=1}^{\mathcal{M}_2}\Phi_a(x_2-x_2^k(t))\Bigg]\\\label{rho_alpha}
    =&\frac{\mathcal{M}_1}{\mathcal{M}_1+\mathcal{M}_2}\rho^{\mathcal{M}_1}(x_1,t) + \frac{\mathcal{M}_2}{\mathcal{M}_1+\mathcal{M}_2}\rho^{\mathcal{M}_2}(x_2,t),
\end{align}
where $\rho^{\mathcal{M}_j}(x_j,t)=\frac{1}{\mathcal{M}_j}\sum_{k=1}^{\mathcal{M}_j}\Phi_a(x_j-x_1^k)$ and $\mathcal{M}=\mathcal{M}_1+\mathcal{M}_2$ is the total number of vehicles. The global cost functional can now be expressed in terms of smooth density information instead of discrete positions of individual vehicles as follows \cite{cardaliaguet2010notes}:
\begin{align}
    &F^{\mathcal{M}_j}(v^k_j(t),x^k_j(t),\textbf{x}^{\sim k}_j(t), \textbf{x}_{\sim j}(t)):=F(v_j(t),\rho^{\mathcal{M}_j}(x_j(t),t),\rho^{\mathcal{M}_{\sim j}}(x_{\sim j}(t),t),\alpha_j),
\end{align}
where $\alpha_j= \frac{\mathcal{M}_j}{\mathcal{M}_1+\mathcal{M}_2}$. Here $\rho^{\mathcal{M}_{\sim j}}$ represents the density of the vehicles not choosing Route $j$. Notably, we observe here that  the driving cost functional is affected by  
parameter $\alpha_j$ which in turn is dependent on the output of human route choice model as shown in \eqref{Mj}.

Next, we obtain the MFG limit by making the number vehicles $\mathcal{M}_1$ and $\mathcal{M}_2$ to be arbitrarily large such that $\frac{\mathcal{M}_1}{\mathcal{M}_2}\to \kappa $ where $0<\kappa<\infty$.   This implies that limit to infinity of both kinds of vehicles must be of the same order. We also make the smoothing parameter $a$ in \eqref{gau} arbitrarily small such that $a/\mathcal{M}_j \to 0, \forall j$. This implies that on a finite road with increasing number of vehicles and shrinking \textit{window} of density contribution from each vehicle, the local density will eventually describe the global density. 

As $\mathcal{M}_j \to \infty$, $\rho^{\mathcal{M}_j}(x_j,t)\to\rho_j(x_j,t)$ which describes the density of the traffic when \textit{only} vehicles in Route $j$ was travelling on the road of interest. On the other hand, as both $\mathcal{M}_1,\mathcal{M}_2 \to \infty$ the \textit{effective} density of the traffic due to all classes of vehicles is given by $\rho(x,t)$. Then  \eqref{rho_alpha} yields
\begin{align}\label{rho}
    \rho(x,t)=\alpha \rho_1(x,t)+(1-\alpha)\rho_2(x,t),
\end{align}
where $\alpha = \frac{\kappa}{\kappa+1}$ and $\rho_j(x,t)=\rho(x_j(t),t)$. From the assumptions of MFG limit, we can easily derive the bounds of $\alpha$ to be $0<\alpha<1$.

\subsection{Socio-Technical Model}

In this subsection, we derive the socio-technical macroscopic traffic model in two steps. Note that there are two classes of vehicles each corresponding to a particular route choice. First, we combine the theoretical tools discussed in Sections III.A, III.B, and III.C to develop the dynamics of class-specific traffic state which is dependent on human route choice. Second, we derive the continuity equation for traffic flow model for each class of vehicles.  

Towards the first step, we consider the MFG setting where the position dynamics of the class of vehicles choosing Route $j$ be represented by 
\begin{align}
    \Dot{x_j}(\zeta)=v_j(\zeta), \quad x_j(t_0)=x_0, \quad \zeta \in [t_0,t].\label{eq11}
\end{align}
Then, we define the optimal cost functional $\mathcal{H}_j(x,t)$ for the class of  vehicles along Route $j$ to reach position $x$ at time $t$. 
  \begin{align}
    \mathcal{H}_j(x,t)&=R(x_j(t_0)) + \!\!\min\limits_{\substack{v_j(\tau)\\t_0\leqslant\tau \leqslant t}}\!\!\int_{t_0}^t\!\! F\, d\zeta, \label{cost_function_main}
\end{align}
where $F=F(v_j(\zeta),\rho_j(x,\zeta),\rho_{\sim j}(x,\zeta),\alpha)$, $v_j(\zeta)$ is generated from \eqref{eq11}, and $x_j(t)=x, \, j\in \{1,2\}$.

 From the fundamentals of dynamic programming, we know that the optimal cost functional $\mathcal{H}_j$ satisfies the HJB equation given as \cite{kirk2004}:
\begin{align}
  \frac{\partial \mathcal{H}_j(x,t)}{\partial t} =& \min\limits_{v_j(t)}\Bigg\{ F(v_j(t),\rho_j(x,t),\rho_{\sim j}(x,t),\alpha) - v_j(t) \frac{\partial \mathcal{H}_j}{\partial x}\Bigg\}. \label{hjb_0}
\end{align}

Thereafter, let us introduce the Legendre-Fenchel transform $F^\ast:(I^\ast\times \mathbb{R^+}\times \mathbb{R^+},\mathbb{R^+})\to \mathbb{R}$ where $I^\ast=\{p\in\mathbb{R}: \min\limits_{x\in\mathbb{R}}\{F(x, \rho_1,\rho_2,\alpha)-px\} <\infty\}$ and $F^\ast\left(p, \rho_1,\rho_2,\alpha\right):= \min\limits_{x \in \mathbb{R}} \left\{ F(x,\rho_1,\rho_2,\alpha)- x p\right\}.$ This transform is well-defined for a convex function $F$.  Furthermore, let us define 
$R(x_j(t_0)):=R_j(x_0)$. Also, define the optimal velocity solution for \eqref{hjb_0} to be $u_j(x,t)$. Then \eqref{hjb_0} can be equivalently written as a first order PDE 
\begin{align}\label{hjb}
    &\frac{\partial \mathcal{H}_j(x,t)}{\partial t} = F^\ast\left(\frac{\partial \mathcal{H}_j(x,t)}{\partial x}, \rho_j(x,t),\rho_{\sim j}(x,t),\alpha\right) ,
\end{align}
and the optimal solution can be written as 
\begin{align}
    u_j= F^\ast_w\left(w , \rho_j,\rho_{\sim j},\alpha\right)\Big\rvert_{w = \frac{\partial \mathcal{H}_j}{\partial x}}. \label{optimal}
\end{align}
The initial condition for \eqref{hjb} is given by $\mathcal{H}_j(x, t_0) = R_j(x_0)$.

Next, towards the second step, we derive the continuum equation which provides the dynamical equation for the density of vehicles. This is derived from the set of conservation laws for a class of vehicles choosing a specific route and is given below \cite{fan_multiclass}:
\begin{align}\label{cont_1}
    \frac{\partial \rho_j}{\partial t}+\frac{\partial (\rho_j u_j)}{\partial x}=0,\, j\in\{1,2\}.
\end{align}

Now, we finally present the multi-class traffic model with a human choice attribute.
The conservation of vehicles is obtained from \eqref{cont_1} while the dynamics of the driving cost variable is obtained  
from \eqref{hjb_0} and \eqref{optimal}.
The final model equations reads
\begin{align}\label{rho_dyn}
    &\frac{\partial \rho_j}{\partial t}+\frac{\partial (\rho_j u_j)}{\partial x}=0, \\ \label{H_dyn}
    &\frac{\partial \mathcal{H}_j}{\partial t}+u_j\frac{\partial \mathcal{H}_j}{\partial x}=F(u_j,\rho,\rho_{\sim j},\alpha),\\
    &u_j = \mathcal{I}\left(\frac{\partial \mathcal{H}_j}{\partial x}, \rho_j,\rho_{\sim j},\alpha\right), \label{FD_u}
\end{align}
where $\mathcal{I} = F^\ast_w(w,.)$ represents the velocity-density relation (Fundamental Diagram) which depends on the densities of the both class of vehicles. Note that $\mathcal{I}$ is parameterized by the driving cost function $\frac{\partial \mathcal{H}_j}{\partial x}$ and human route choice parameter $\alpha$. We also note from \eqref{cost_function_main} that $\mathcal{H}_j$ depends on human route choice parameter $\alpha$. This in turn implies that the Fundamental Diagram is dependent on $\alpha$ implicitly through $\mathcal{H}_j$ as well.

Defining a state vector
\begin{align}\label{eta}
     \eta(x,t) = [\rho_1(x,t),u_1(x,t),\rho_2(x,t), u_2(x,t)]^T,
\end{align}
we can write the boundary conditions for this system as
\begin{align}\label{BC}
    \eta(B^-,t) = G_B\eta(A^+,t), \, \eta(C^-,t) = G_C \eta(B^+,t), \forall t\in \mathbb{R}^+,
\end{align}
where $G_B$ and $G_C$ are $4\times 4$ matrices. Moreover, the Rankine-Hugoniot Condition \cite{benedetto_book} provides the connection formula for conservation of fluxes before and after the juncture point B and is given by:
\begin{align}\label{RH1}
    &\alpha u_1(B^-,t)\rho_1(B^-,t)+(1-\alpha)u_2(B^-,t)\rho_2(B^-,t) \\
    &= u_1(B^+,t)\rho_1((B^+,t)+u_2(B^+,t)\rho_2(B^+,t).
\end{align}
The initial condition for the model is
\begin{align}\label{IC}
    \eta(x,0)=\eta_0.
\end{align}
Finally, the socio-technical model for human route choice is given by dynamical equations \eqref{rho_dyn}-\eqref{FD_u}, boundary condtions \eqref{BC}, initial condition \eqref{IC} and connection formula for the road juncture at B \eqref{RH1}.

\section{Mathematical Characteristics of the Socio-Technical Model}
Depending on the driving objective of the CACC, we can choose various cost functionals $F(.)$ in \eqref{H_dyn}. To capture comparable impact of the terms in the cost functional, we assume that all variables are normalized.

\begin{align}
    &F(u_j,\rho_1,\rho_2,\alpha) =\mathfrak{L}(x) F_{j}+ (1-\mathfrak{L}(x))G_{j},\label{functional}\\
    &F_{1} = \frac{u_1^2}{2}-u_1 + \alpha u_1 \rho_1 , \label{functional1a}\\
    &F_{2} = \frac{u_2^2}{2}-u_2 +(1-\alpha) u_2 \rho_{2},  \label{functional1b}\\
    &G _j = \frac{u_j^2}{2}-u_j + u_j \rho_{ j},   \forall j\in \{1,2\}.\label{functional2}
\end{align}
The cost functional $F_{j}$ corresponds to the vehicles driving on main road (between A and B), and the cost functional $G_{j}$ corresponds to the vehicles driving on the Route $j$ (between C and D). The function $\mathfrak{L}=1$ when the cars are in AB while  $\mathfrak{L}=0$ when they are on Route $j \in \{1,2\}$. For functional $F_{j}$ in \eqref{functional1a}-\eqref{functional1b}, the first and second terms represent the kinetic energy and driving efficiency of the class of vehicles choosing Route $j$ whereas the last term represents the driving safety of the vehicles of class $j$ in response to the density of the vehicles of same class on the road.  Similarly, for the functional $G_{j}$ in \eqref{functional2}, terms include kinetic energy, efficiency and the safety of vehicles on Route $j$. 

Next, we analyse the model for $\mathfrak{L}=1$. The analysis for $\mathfrak{L}=0$ can be done in a similar manner. Consequently, we analyze the system in $x\in [A,B]$ and argue that the model can be similarly analyzed in the domain $x \in [C,D]$. Moreover, for simplicity we have assumed  the \textit{starting penalty} \mbox{$R(x_j(t_0))=R_j(x_0)$} in \eqref{cost_function_main} to be zero. We analyze a few salient characteristics of the socio-technical traffic model: (i) we examine the Fundamental Diagram of the model, (ii) investigate the criteria for hyperbolicity of the model, and (iii) analyze the linear stability based on linearized version of the model, around operating points of the CACC-enabled vehicular traffic.
\subsection{Fundamental Diagram}
Due to the specific form of functional chosen in \eqref{functional1a}-\eqref{functional1b}, we can calculate the velocity-density relation from \eqref{optimal} to be:
\begin{align}\label{vel_specific}
    u_1 &= \mathcal{I}\left(\frac{\partial \mathcal{H}_1}{\partial x}, \rho_1,\alpha\right)=1+ \frac{\partial\mathcal{H}_1}{\partial x} -\alpha\rho_1 , \\
       u_2 &= \mathcal{I}\left(\frac{\partial \mathcal{H}_2}{\partial x}, \rho_2,\alpha\right)=1+ \frac{\partial\mathcal{H}_2}{\partial x} -(1-\alpha)\rho_2 , \label{vel_specific11}
\end{align}
Next, let us define the effective densities for the two classes of vehicles $d_1 = \alpha \rho_1$ and $d_2 = (1-\alpha)\rho_2$. Using \eqref{rho}, the total effective density is given by $d_1+d_2=\rho$. Then \eqref{vel_specific} and \eqref{vel_specific11} can be re-written as 
\begin{align}\label{vel_eff}
     u_j &= \mathcal{I}\left(\frac{\partial \mathcal{H}_j}{\partial x}, d_j\right)=1+ \frac{\partial\mathcal{H}_j}{\partial x} -d_j.
\end{align}
This indicates that for the chosen cost functional, velocities of both classes of vehicles depend the optimal driving cost $\mathcal{H}_j$ whereas the latter depends on $\alpha$ (see \eqref{cost_function_main}). This implies that the velocity-effective density relation also implicitly depends on $\alpha$. Hence, $\frac{\partial\mathcal{H}_j}{\partial x}$ is termed as a human route choice attribute for the Fundamental Diagram.

Now, for equilibrium flow in this linear model, we have $\frac{\partial\mathcal{H}_j}{\partial x}=0$ \cite{Di2}. This leads to $u_j = 1 -d_j$ in \eqref{vel_eff}. This implies that under equilibrium condition, both classes of vehicles follow a Greenshields Fundamental Diagram with slope $-1$ and maximum velocity $1$. Now, under non-equilibrium condition, the maximum velocity depends on the socio-technical parameter of each class of vehicles as $u_j\big|_{max} = 1+ \frac{\partial\mathcal{H}_j}{\partial x}$ while the slope of Fundamental Diagram still being $-1$. The effect of the human route choice attribute $\frac{\partial\mathcal{H}_j}{\partial x}$ has been shown in the Fundamental Diagram in Fig. \ref{fig:FD}.

\begin{figure}[t]
    \centering
    \includegraphics[trim = 0mm 0mm 0mm 0mm, clip,  scale=1.0, width=0.3\linewidth]{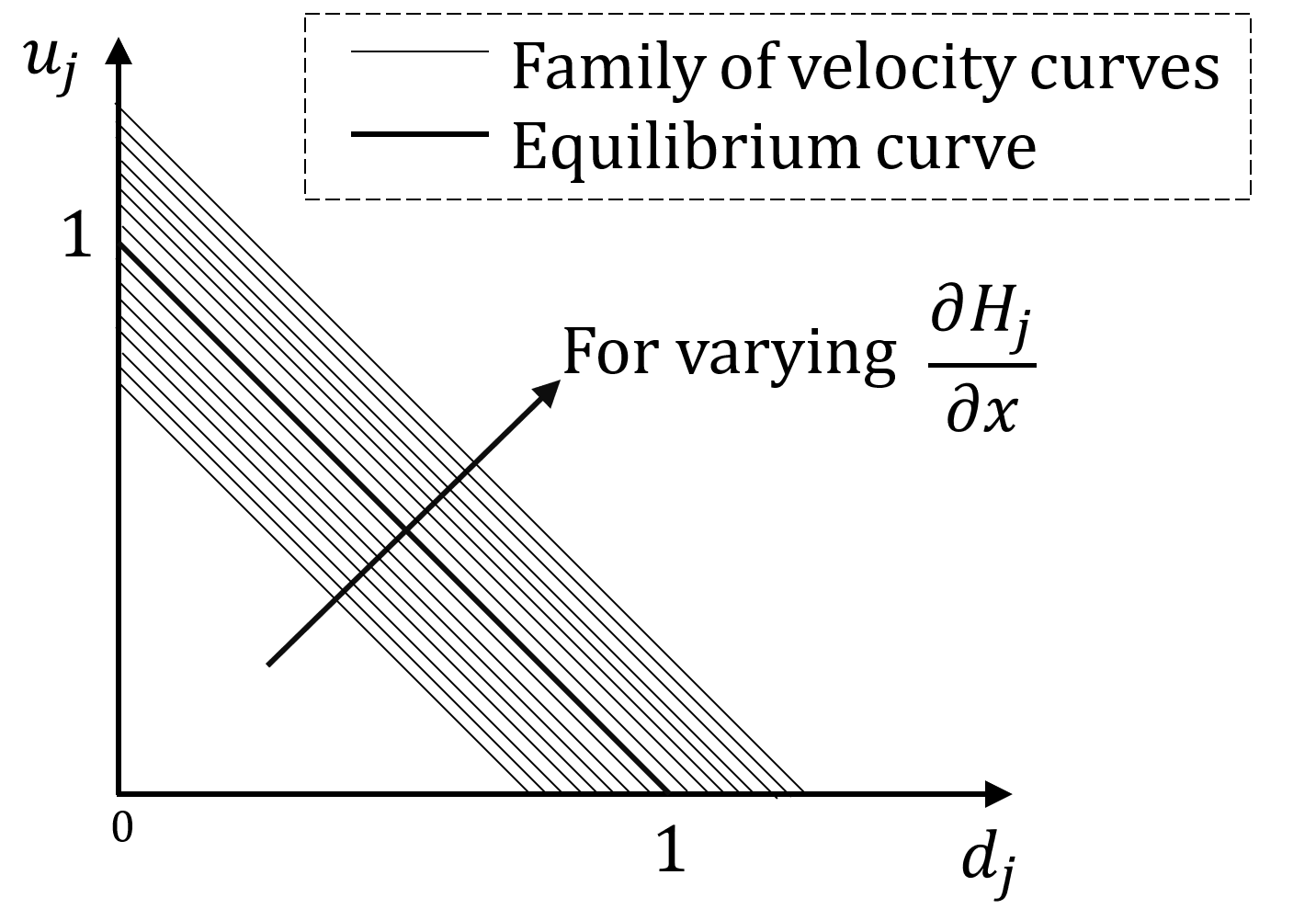}
    \caption{Velocity vs effective density Fundamental Diagram showing variation due to human route choice attribute $\frac{\partial \mathcal{H}_j}{\partial x}$.}
    \label{fig:FD}
\end{figure}

\subsection{Linearized Traffic Model}

We can further simplify the system by eliminating $\mathcal{H}_j$ from the equations \eqref{vel_specific}-\eqref{vel_specific11} using \eqref{rho_dyn}-\eqref{H_dyn} to obtain:
\begin{align}\label{nonlin1}
    &\frac{\partial u_1}{\partial t} + u_1 \frac{\partial u_1}{\partial x} -\alpha\frac{\partial (\rho_1 u_1)}{\partial x}  =0,\\\label{nonlin2}
    &\frac{\partial u_2}{\partial t} + u_2 \frac{\partial u_1}{\partial x} -(1-\alpha)\frac{\partial (\rho_2 u_2)}{\partial x} =0.
\end{align}
The above equations along with the continuity equations in \eqref{cont_1} form the socio-technical model the traffic in this case study.  This system of equations can be alternatively written in the form of a conservation law as below:
\begin{align}\label{nonlin_eta}
    &\eta_t + \mathcal{Q}(\eta)_x=0,
\end{align}
where $\eta$, defined in \eqref{eta}, denotes the traffic state. The flux $\mathcal{Q}$ of $\eta$ is given by 
\begin{align}
    \mathcal{Q}([\rho_1,u_1,\rho_2, u_2]^T) = \begin{bmatrix}
    \rho_1u_1\\
    \frac{u_1^2}{2} -\alpha\rho_1u_1 \\
    \rho_2u_2\\
    \frac{u_2^2}{2} -(1-\alpha)\rho_2u_2
    \end{bmatrix}.
\end{align}

Since our traffic is assumed to be under CACC control strategy, we can linearize the system around its operating points $\eta^\ast = [ \rho^\ast_1,u^\ast_1,\rho^\ast_2,  u^\ast_2]^T$ for the two classes of vehicles and subsequently analyze its properties from the linearized model. To linearize, we take a Taylor series expansion of $\mathcal{Q}$ around operating point $\eta^\ast$ in \eqref{nonlin_eta} and  neglect higher order terms to obtain:
\begin{align}\label{eta_model}
    &\eta_t + ({\partial \mathcal{Q}}/{\partial \eta})\big|_{\eta=\eta^\ast} \eta_x = 0,
\end{align}
Here $J = ({\partial \mathcal{Q}}/{\partial \eta})|_{\eta=\eta^\ast}$ is the Jacobian of the system and is given by
\begin{align}\label{Jacobian}
    &J = \frac{\partial \mathcal{Q}}{\partial \eta}\Bigg|_{\eta=\eta^\ast}  = \begin{bmatrix}
    A&0\\0&B
    \end{bmatrix}, \\\label{A}
    &A = \begin{bmatrix}
    u_1^\ast & \rho_1^\ast  \\
   -\alpha u_1^\ast  & u_1^\ast -\alpha\rho_1^\ast  
    \end{bmatrix},
   \\\label{D}
    & B = \begin{bmatrix}
    u_2^\ast  & \rho_2^\ast \\
    -(1-\alpha) u_2^\ast  & u_2^\ast -(1-\alpha)\rho_2^\ast 
    \end{bmatrix}.
\end{align}

\subsection{Hyperbolicity of the Traffic Model}
In a qualitative sense, hyperbolicity of a Partial Differential Equation (PDE) system reflects a wave-like nature of its solution. This implies that disturbances to the system propagate at finite speeds along the characteristics of the equations. For a PDE model to be considered as a traffic model, it is imperative to prove that such system is hyperbolic in nature. In this section, we prove that the socio-technical model considered in our case study is strictly hyperbolic. 
\begin{deff}[Strict hyperbolicity \cite{evans_PDE}]
A PDE system is strictly hyperbolic if and only if all the eigenvalues of its Jacobian are real and distinct. 
\end{deff}

\begin{Lemm}[Strict hyperbolicity of the socio-technical traffic model]
Consider the multi-class socio-technical traffic model given in \eqref{eta_model}-\eqref{D}. For a given non-zero operating density and velocity for each route $j\in \{1,2\}$ to be $(\rho_j^\ast,u_j^\ast)$, the traffic model is strictly hyperbolic (with negative definite Jacobian $J$) if and only if the following conditions are satisfied:
\begin{align}\label{hyper_cond}
    \frac{4u_1^\ast}{\rho^\ast_1}<\alpha,\quad \frac{4u_2^\ast}{\rho^\ast_2}<(1-\alpha).
\end{align}
\end{Lemm}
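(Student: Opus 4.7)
The plan is to exploit the block-diagonal structure of the Jacobian in \eqref{Jacobian}: since $J = \mathrm{diag}(A,B)$, the spectrum of $J$ is simply the union of the spectra of $A$ and $B$, so I will reduce the question to an independent analysis of each $2\times 2$ block and then take the conjunction of the resulting conditions.

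For block $A$ in \eqref{A}, I would read off $\mathrm{tr}(A) = 2u_1^\ast - \alpha\rho_1^\ast$ and $\det(A) = (u_1^\ast)^2$, so its characteristic polynomial is $\lambda^2 - \mathrm{tr}(A)\lambda + \det(A) = 0$ with discriminant
\begin{equation*}
  \Delta_1 \;=\; (2u_1^\ast - \alpha\rho_1^\ast)^2 - 4(u_1^\ast)^2 \;=\; \alpha\rho_1^\ast\bigl(\alpha\rho_1^\ast - 4u_1^\ast\bigr).
\end{equation*}
Strict hyperbolicity of the $A$-block amounts to real and distinct eigenvalues, i.e.\ $\Delta_1 > 0$, which under the physically admissible sign convention $\alpha, \rho_1^\ast > 0$ is equivalent to $\alpha > 4u_1^\ast/\rho_1^\ast$. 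The identical computation applied to block $B$ in \eqref{D} yields $\Delta_2 = (1-\alpha)\rho_2^\ast\bigl((1-\alpha)\rho_2^\ast - 4u_2^\ast\bigr) > 0$, giving the second inequality in \eqref{hyper_cond}. The \emph{only if} direction is then free: if either $\Delta_j \leq 0$, the corresponding block has either a complex conjugate pair or a repeated eigenvalue, which would violate strict hyperbolicity of $J$.

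For the claim that $J$ is ``negative definite'' (read here in the Hurwitz sense that all eigenvalues have negative real part, since the non-symmetric $J$ cannot be negative definite as a quadratic form when $u_j^\ast > 0$), I would invoke the trace-determinant criterion on each block: once real eigenvalues are guaranteed, both are negative iff the trace is negative and the determinant is positive. For $A$, $\det(A) = (u_1^\ast)^2 > 0$ automatically, and the hyperbolicity bound $\alpha\rho_1^\ast > 4u_1^\ast > 2u_1^\ast$ forces $\mathrm{tr}(A) = 2u_1^\ast - \alpha\rho_1^\ast < 0$; the parallel argument handles $B$.

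The main obstacle is essentially bookkeeping rather than any deep difficulty: one must be explicit about the physically admissible regime ($u_j^\ast, \rho_j^\ast > 0$, $0 < \alpha < 1$) and about the (loose) sense in which ``negative definite'' is applied to a non-symmetric matrix. A minor interpretive wrinkle worth flagging is that strict hyperbolicity of the full $4\times 4$ system also nominally requires the four block eigenvalues to be mutually distinct across blocks, but this is a generic (measure-zero) condition on the operating point and does not affect the discriminant-based criterion \eqref{hyper_cond}.
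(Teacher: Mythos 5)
Your proof is correct and follows essentially the same route as the paper: exploit the block-diagonal structure of $J$, reduce strict hyperbolicity to positivity of the $2\times 2$ block discriminants, and read off condition \eqref{hyper_cond}. Your trace--determinant argument for negativity of the spectrum (using $\det(A)=(u_1^\ast)^2>0$ together with $\mathrm{tr}(A)=2u_1^\ast-\alpha\rho_1^\ast<-2u_1^\ast<0$) is in fact tidier than the paper's appeal to the explicit eigenvalue formulas, and your caveats about the non-symmetric sense of ``negative definite'' and about possible eigenvalue coincidences between the $A$- and $B$-blocks flag genuine points the paper glosses over.
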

\begin{proof}
Since $B^2=0$, the characteristic equation $\mathcal{P}(\lambda)$ of the Jacobian $\frac{\partial \mathcal{Q}}{\partial \eta}$ simplifies to
\begin{align}
\mathcal{P}(\lambda):=\det(\lambda I-A)\det(\lambda I-B).
\end{align}
The four eigenvalues can be computed from the roots of $\mathcal{P}(\lambda)=0$ as
\begin{align}
    &\lambda_{1,2} = \frac{1}{2}\left[u_1^\ast-\alpha\rho_1^\ast \pm \alpha\rho_1^\ast\sqrt{1-\frac{4u_1^\ast}{\alpha\rho_1^\ast}}\, \right],\\
    &\lambda_{3,4} = \frac{1}{2}\left[u_2^\ast-(1-\alpha)\rho_2^\ast \pm (1-\alpha)\rho_2^\ast\sqrt{1-\frac{4u_2^\ast}{(1-\alpha)\rho_2^\ast}}\, \right].
\end{align}
The discriminants here are given by 
$
    \Delta_1 = 1-\frac{4u_1^\ast}{\alpha\rho_1^\ast}, 
    \Delta_2 = 1-\frac{4u_2^\ast}{(1-\alpha)\rho_2^\ast}.
$
It is evident that if  the discriminants $\Delta_1,\Delta_2$ will be positive, the roots will be real and distinct. Thus under conditions of \eqref{hyper_cond} the eigenvalues of the system are real and distinct and  \eqref{eta_model} is strictly hyperbolic.

Lastly, since operating density, $\rho_j^\ast$, velocity $u_j^\ast$ and $\alpha$ are all positive, we note here that $\Delta_1,\Delta_2<1$. Using \eqref{hyper_cond} with the upper bound on $\Delta_j$, one can easily show that all the eigenvalues $\lambda_m, m\in\{1,2,3,4\}$ are \textit{negative} and $J$ is negative definite. 
\end{proof}

\subsection{Linear Stability Analysis}
In this subsection, we analyze the linear stability of the traffic model \eqref{eta_model}.

\begin{deff}[Exponential Stability \cite{expo_stability}] The linear hyperbolic system given by \eqref{eta_model}-\eqref{D} along with its boundary \eqref{BC} and initial conditions \eqref{IC} is exponentially stable around operating point $\eta^\ast$ in $x\in[A,B]$ if there exists an $\epsilon>0$ and  $0<M<\infty$ such that for every initial condition $\eta_0 \in L^2([A,B]; \mathbb{R}^4)$, the solution of the boundary value problem satisfies
\begin{align}\label{stability_defn}
    \|\eta(.,t)-\eta^\ast\|_{L^2([A,B]; \mathbb{R}^4)}\leqslant Me^{-\epsilon t }\|\eta_0-\eta^\ast\|_{L^2([A,B]; \mathbb{R}^4)}, 
\end{align}
for all $t\in[0,\infty).$
\end{deff}

\begin{remm}
Exponential stability of the traffic system implies that the traffic state $\eta$ will converge exponentially to operating condition $\eta^\ast$ as $t\to \infty$ given bounded energy error in the initial condition of the system. This indicate that the density and velocity of both the classes of vehicles converge to operating condition as $t\to \infty$. In other words, $\rho_j\to \rho^*_j$ and $u_j\to u_j^\ast$ for $j\in\{1,2\}$ as $t\to \infty$. Additionally, the rate of decay $\epsilon$ depends on system characteristics, in particular, on the Jacobian $J$ for our problem. Since, the Jacobian is a function of human route choice attribute $\alpha$, the convergence also depends on the same.
\end{remm}

\begin{thmm}[Exponential Stability Condition]
Consider the Socio-technical hyperbolic traffic model in \eqref{eta_model}-\eqref{D} along with its boundary \eqref{BC} and initial conditions \eqref{IC}  that satisfies condition \eqref{hyper_cond}. This system is exponentially stable in the sense of \eqref{stability_defn} around operating point $\eta^*$ if there exist a $\mu>0$ such that the matrix $\mathcal{J}$, given by
\begin{align}
    \mathcal{J}=J-G_B^TJG_Be^{\mu (A-B)},
\end{align}
is positive definite.
\end{thmm}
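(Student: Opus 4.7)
My plan is to prove exponential stability by constructing a weighted quadratic Lyapunov functional for the deviation $\tilde\eta := \eta - \eta^\ast$. Since $\eta^\ast$ is a constant operating point, $\tilde\eta$ satisfies the same linearized equation $\tilde\eta_t + J\tilde\eta_x = 0$ with boundary condition $\tilde\eta(B^-,t) = G_B \tilde\eta(A^+,t)$. The natural candidate is
\begin{align*}
    V(t) = \int_A^B \tilde\eta(x,t)^T \tilde\eta(x,t)\, e^{-\mu x}\, dx,
\end{align*}
with $\mu>0$ the parameter appearing in the theorem. The weight $e^{-\mu x}$ is chosen so that the trace at $x=B$ carries a smaller weight than the trace at $x=A$, which is exactly what allows the feedback coupling $\tilde\eta(B)=G_B\tilde\eta(A)$ to contribute dissipation in the $\mu$-weighted metric. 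Because $[A,B]$ is compact, $V$ is equivalent to $\|\tilde\eta(\cdot,t)\|_{L^2([A,B];\mathbb{R}^4)}^2$ with sandwich constants $e^{-\mu B}$ and $e^{-\mu A}$, so establishing $\dot V \le -2\epsilon V$ will yield \eqref{stability_defn} through a Gronwall argument with $M = e^{\mu(B-A)/2}$.

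First I would differentiate $V$ along trajectories and substitute $\tilde\eta_t = -J\tilde\eta_x$ to obtain $\dot V = -2\int_A^B \tilde\eta^T J\tilde\eta_x\, e^{-\mu x}\, dx$. Using the scalar-transpose identity $\tilde\eta^T J\tilde\eta_x = \tilde\eta_x^T J^T \tilde\eta$ to symmetrise the integrand into $\tfrac12\partial_x(\tilde\eta^T J\tilde\eta)$ via the symmetric part of $J$, then integrating by parts, gives
\begin{align*}
    \dot V = -\bigl[\tilde\eta^T J \tilde\eta\, e^{-\mu x}\bigr]_A^B - \mu\int_A^B \tilde\eta^T J\tilde\eta\, e^{-\mu x}\, dx.
\end{align*}
Substituting the boundary condition $\tilde\eta(B) = G_B\tilde\eta(A)$ into the boundary term and factoring out $e^{-\mu A}$ produces exactly
\begin{align*}
    \dot V = e^{-\mu A}\,\tilde\eta(A)^T \mathcal{J}\, \tilde\eta(A) - \mu\int_A^B \tilde\eta^T J\tilde\eta\, e^{-\mu x}\, dx,
\end{align*}
where $\mathcal{J} = J - G_B^T J G_B e^{\mu(A-B)}$ is the matrix in the theorem's hypothesis.

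From here the two pieces of the hypothesis are used to close the argument. The definiteness condition on $\mathcal{J}$ controls the boundary quadratic form, while the spectral information on $J$ furnished by Lemma 1 controls the interior integral; together they yield $\dot V \le -2\epsilon V$ for some $\epsilon>0$ depending on $\mu$, the spectral bounds on $\mathcal{J}$, and the spectral bounds on $J$. A direct application of Gronwall's inequality then gives $V(t)\le V(0)e^{-2\epsilon t}$, which by the norm equivalence from the first step transfers to the desired $L^2$ estimate \eqref{stability_defn}.

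The principal obstacle is the integration-by-parts step: since $J$ is non-symmetric in general (the off-diagonal entries of the block $A$ in \eqref{A} differ), the product $\tilde\eta^T J\tilde\eta_x$ is not a pure total derivative, and one must consistently work with the symmetric part $(J+J^T)/2$ while tracking how this interacts with the conjugation $G_B^T J G_B$. A secondary subtlety is reconciling sign conventions: the positive-definiteness of $\mathcal{J}$ must be interpreted so that both the boundary quadratic form and the interior integral drive $V$ downward, which ultimately amounts to verifying that the feedback loop $\eta(B^-)=G_B\eta(A^+)$ has loop gain strictly less than one in the $\mu$-weighted inner product dictated by the direction of characteristic propagation of the negative-spectrum Jacobian $J$.
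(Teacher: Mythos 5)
Your strategy is the same as the paper's: the weighted functional $V(t)=\int_A^B e^{-\mu x}\,|\eta-\eta^\ast|^2\,dx$, differentiation along the flow, integration by parts, substitution of $\tilde\eta(B)=G_B\tilde\eta(A)$ to assemble $\mathcal{J}$, Lemma~1 for the interior term, and Gronwall plus norm equivalence. But the proposal does not close, and the two places where it fails are exactly the two points you defer to the end as ``obstacles.'' First, the sign structure of your own identity
\[
\dot V \;=\; e^{-\mu A}\,\tilde\eta(A)^T\mathcal{J}\,\tilde\eta(A)\;-\;\mu\int_A^B\tilde\eta^T J\,\tilde\eta\,e^{-\mu x}\,dx
\]
is incompatible with the theorem's hypotheses: with $\mathcal{J}$ positive definite the boundary term is nonnegative, and with $J$ negative definite (Lemma~1, so $\tilde\eta^TJ\tilde\eta\leqslant-\lambda|\tilde\eta|^2$) the interior term satisfies $-\mu\int\tilde\eta^TJ\tilde\eta\,e^{-\mu x}dx\geqslant\mu\lambda V\geqslant 0$. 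Your identity therefore gives $\dot V\geqslant\mu\lambda V$, i.e.\ growth, not decay. The paper's proof arrives at the opposite signs on both terms, namely $\dot V\leqslant-2\mu\lambda V-E(A)^T\mathcal{J}E(A)e^{-\mu A}$, and it is only with those signs that positive definiteness of $\mathcal{J}$ together with negative definiteness of $J$ yields $\dot V\leqslant-2\mu\lambda V$. This cannot be dismissed as ``reconciling sign conventions'': either your integration by parts or the paper's carries a sign error, and until that is resolved (for instance by switching the weight to $e^{+\mu x}$, which is the natural choice when all characteristic speeds are negative, or by reversing the required definiteness of $\mathcal{J}$), the inequality $\dot V\leqslant-2\epsilon V$ does not follow from anything you have written.

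Second, you correctly observe that $J$ is not symmetric, so $\tilde\eta^TJ\tilde\eta_x\neq\tfrac12\partial_x(\tilde\eta^TJ\tilde\eta)$; but you never dispose of the leftover contribution $\int_A^B\tilde\eta^T\tfrac{J-J^T}{2}\tilde\eta_x\,e^{-\mu x}dx$, which is not a total derivative, produces no boundary term, and is controlled by neither $\mathcal{J}$ nor the spectrum of $J$. It must be eliminated (e.g.\ by first symmetrising the system with a matrix weight $P$ making $PJ$ symmetric, equivalently by passing to Riemann coordinates) or explicitly estimated; naming it as the principal obstacle is not the same as overcoming it. On this point your diagnosis is actually sharper than the paper's, which silently treats $J$ as symmetric, but neither text supplies the missing step, so the argument as proposed remains incomplete.
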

\begin{proof}
We define the deviation in the traffic states from the operating point $\eta^\ast$ as a new vector
\begin{align}
    E(x,t) = \eta(x,t) - \eta^\ast.
\end{align}
Thereafter, we choose the Lyapunov functional candidate 
\begin{align}
     V(t) =& \frac{1}{2}\int_A^Be^{-\mu \zeta}E^T(\zeta,t)E(\zeta,t)\, d\zeta, \label{Lyap}
\end{align}
where $0<\mu<1$. We can re-write \eqref{Lyap} as
\begin{align}\label{V_defn}
    &V(t)=\frac{1}{2}\|E(.,t)e^{-\frac{\mu}{2} (.)}\|^2_{L^2([A,B])},
\end{align}   
where $\|f(.,t)\|^2_{L^2([A,B])}=\int_A^Bf^2(\zeta,t)d\zeta$. The initial condition $V(0)$ is given by $V(0)=\frac{1}{2}\|E(.,0)e^{-\frac{\mu}{2} (.)}\|_{L^2([A,B])}$.

Taking derivative of $E(t)$ with respect to $t$ yields
\begin{align}\nonumber
    \dot{V}(t)& = \int_A^B e^{-2\mu \zeta}E_t^T(\zeta,t)E(\zeta,t)\, d\zeta\\\nonumber
    & = e^{-\mu A}E_\zeta^T(A,t)Je(A,t)-e^{-\mu B}E_\zeta^T(B,t)JE(B,t)\\
    &\hspace{1.3in}+\mu\int_A^B E^TJE e^{-\mu\zeta}\, d\zeta.
        \end{align}
From Lemma 1, we know that the Jacobian $J$ is negative definite which implies there exists $\lambda>0$ such that $E^TJE \leqslant -\lambda E^TE$. Additionally, we invoke the boundary condition \eqref{BC} to obtain 
    \begin{align}
    & \dot{V}(t) \leqslant -2\mu \lambda V(t) -E^T(A,t)\mathcal{J}E(A,t)e^{-\mu A},
\end{align}
where $\mathcal{J}=J-G_B^TJG_Be^{\mu (A-B)}$. If $\mathcal{J}$ is positive definite, we can write $\dot{V}(t)\leqslant -2\mu \lambda V(t) $, which implies $V(t)\leqslant e^{-2\mu \lambda t} V(0)$. Subsequently, we define 
\begin{align}\label{gamma}
    0<\gamma_1 = \min_{\zeta \in [A,B]}e^{-\frac{\mu \zeta}{2}}\leqslant\gamma_2 = \max_{\zeta \in [A,B]}e^{-\frac{\mu \zeta}{2}}<\infty.
\end{align}
Using these definitions in \eqref{V_defn}, initial condition expression $V(0)$, and \eqref{gamma}, we can write $\gamma^2_1(B-A)\|E(.,t)\|^2_{L^2([A,B])}\leqslant V(t)$ and $V(0)\leqslant \gamma^2_2(B-A)\|E(.,0)\|^2_{L^2([A,B])}$. Therefore we obtain
\begin{align}
   \|E(.,t)\|_{L^2([A,B])}\leqslant \left( \frac{\gamma_2}{\gamma_1}\right)e^{-\mu \lambda t}\|E(.,0)\|_{L^2([A,B])}.
\end{align}
For $0<M= \frac{\gamma_2}{\gamma_1}<\infty$ and $\epsilon = \mu \lambda>0$, we proved the socio-technical traffic system \eqref{eta_model}-\eqref{D} along with its boundary \eqref{BC} and initial conditions \eqref{IC}  is exponentially stable in the sense of \eqref{stability_defn}.
\end{proof}

\section{Simulation Results}

In this section, we perform simulation studies to illustrate the characteristics of the proposed model, as discussed in previous sections. The distributed plot of the traffic density and velocity for the two classes of the vehicles choosing a route are shown in Fig. \ref{fig:density} and Fig. \ref{fig:density2}. For this simulation, we have chosen $\alpha = 0.45$. The normalized operating density-velocity pair for the first class has been chosen as (0.85, 0.09,) while for second class it is chosen as (0.75,0.095). We note here that these operating values satisfy the conditions provided in Lemma 1. The eigenvalues of the Jacobian in this case are $-0.0549, -0.1476,   -0.0534$ and  $-0.1691$. This implies that the simulated socio-technical traffic model is strictly hyperbolic. Moreover, from both Fig. \ref{fig:density} and Fig. \ref{fig:density2}, we can observe that the system stabilizes to its normalized operating point after perturbation in initial conditions. Hence, the system is stable in the sense of \eqref{stability_defn}.

\begin{figure}[ht]
    \centering
    \includegraphics[trim = 0mm 0mm 0mm 0mm, clip, scale=1.0, width=0.5\linewidth]{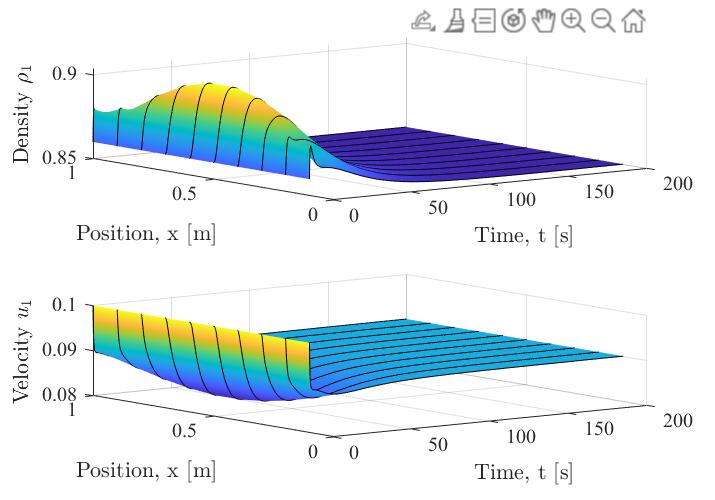}
    \caption{Plot of distributed density and velocity of Class 1 of traffic that chooses Route 1.}
    \label{fig:density}
\end{figure}

\begin{figure}[ht]
    \centering
    \includegraphics[trim = 0mm 0mm 0mm 0mm, clip, scale=1.0, width=0.5\linewidth]{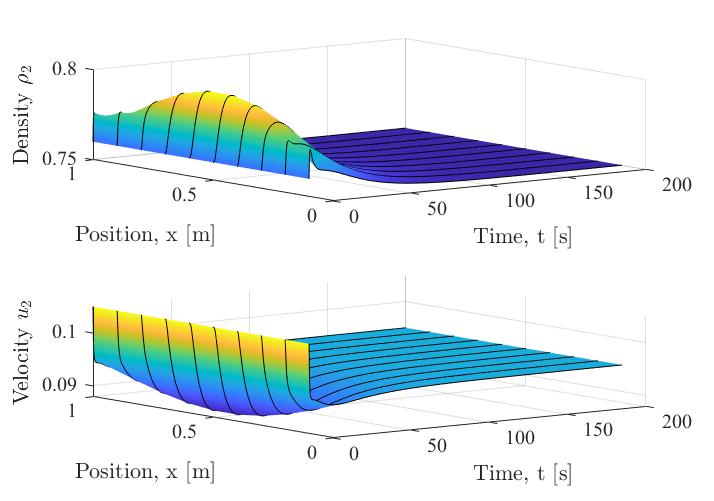}
    \caption{Plot of distributed density and velocity of Class 2 of traffic that chooses Route 2.}
    \label{fig:density2}
\end{figure}

Next, we investigate the impact of the human route choice attribute $\alpha$ on the socio-technical traffic model. Particularly, we explore the affect of $\alpha$ on the stability of the proposed model. We note that $\alpha$ increases as more and more human passengers choose Route 1. As expected, with increasing $\alpha$, the density of class 1 vehicles increases and their velocity decreases. This implies that perturbations to the system show larger overshoots in density and takes longer to converge to operating conditions. On the other hand, the contribution of vehicles in class 2 decreases which corresponds to lower density, higher speed and overall faster recovery from perturbations. These phenomena can be observed clearly from Fig. \ref{fig:stability}, where we plot the spatial norm of the perturbation of traffic density and velocities from operating points of each class. We also note that the simulated system were all stable, as can be seen from the decay of perturbations to zero, albeit at different rates for different $\alpha$.

\begin{figure}[ht]
    \centering
    \includegraphics[trim = 0mm 0mm 0mm 0mm, clip,  scale=1.0, width=0.5\linewidth]{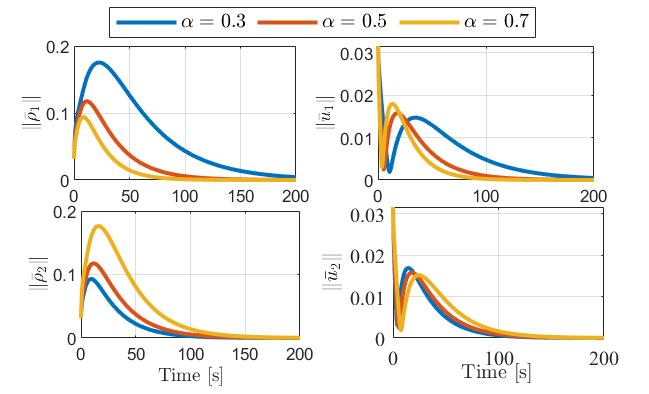}
    \caption{Convergence plots for traffic states' deviations for each class under different values of $\alpha$.}
    \label{fig:stability}
\end{figure}

\section{Conclusion}
In this paper, we proposed a socio-technical macroscopic traffic model for CACC enabled vehicles that captures the effect of human passengers' route choice behavior. Essentially, the human passengers' choice of routes leads to multi-class traffic system where each class of vehicles corresponds to each route. First, we have used CPT to model the influence of traffic alerts on human passengers' route choice behavior. Next, we have utilized non-cooperative differential game and MFG to obtain the macroscopic model of a multi-class traffic where the human route choice characterizes the Fundamental Diagram as well as the dynamics of traffic states. To validate the model characteristics, we perform a case study for a particular driving cost function and present simulation results for the same.

\bibliographystyle{acm} 
\bibliography{ref1}

\end{document}